\theoremstyle{plain}
\newtheorem{theorem}{Theorem}[section]
\newtheorem{lemma}[theorem]{Lemma}
\newtheorem{result}{Result}[section]
\theoremstyle{definition}
\newtheorem*{definition}{Definition}
\newtheorem{example}[theorem]{Example}
\theoremstyle{remark}
\def\dil{\Lam}
\def\capset{\Gamma}
\def\dualset{\capset^\ast}
\def\be{\begin{equation}}
\def\ee{\end{equation}}
\def\dm{n}
\def\Rd{\RR^\dm}
\def\Zd{\ZZ^\dm}
\def\Td{\TT^\dm}
\def\CC{\mathbb{C}}
\def\RR{\mathbb{R}}
\def\ZZ{\mathbb{Z}}
\def\TT{\mathbb{T}}
\def\bks{\backslash}
\def\alp{\alpha}                
\def\bet{\beta}
\def\gam{\gamma}                \def\Gam{\Gamma}
\def\del{\delta}
               \def\Lam{\Lambda}
\def\sig{\sigma}                
\def\ome{\omega}                
\def\hat#1{{\widehat {#1}}}
\begin{document}

\title[Wavelet Filter Banks via Extended Laplacian Pyramid Matrices]
{Design of Wavelet Filter Banks for Any Dilation Using Extended Laplacian Pyramid Matrices}

\author[Y. Hur]{Youngmi Hur}
\address{Youngmi Hur \\ Department of Mathematics \\ Yonsei University \\ Seoul 03722,  Korea}
\email{yhur@yonsei.ac.kr}

\author[S. Kim]{Sungjoo Kim}
\address{Sungjoo Kim \\ School of Mathematics and Computing (Mathematics)\\ Yonsei University\\
Seoul 03722,  Korea}
\email{sungjookim@yonsei.ac.kr}
\thanks{The first author was supported in part by the National Research Foundation of Korea (NRF) [Grant Number 2021R1A2C1007598].}

\subjclass[2020]{Primary 42C40,42C15}

\keywords{Wavelet filter banks; filter bank design; multi-D wavelets; extended Laplacian pyramid matrices; sum of vanishing products condition; mixed unitary extension principle condition; polyphase representation.}

\begin{abstract}
In this paper, we present a new method for designing wavelet filter banks for any dilation matrices and in any dimension. Our approach utilizes extended Laplacian pyramid matrices to achieve this flexibility. By generalizing recent tight wavelet frame construction methods based on the sum of squares representation, we introduce the sum of vanishing products (SVP) condition, which is significantly easier to satisfy. These flexible design methods rely on our main results, which establish the equivalence between the SVP and mixed unitary extension principle conditions. Additionally, we provide illustrative examples to showcase our main findings.
\end{abstract}

\maketitle

\section{Introduction}

Wavelet filter banks have been used in various signal and image processing applications. However, the design of wavelet filter banks, particularly for multidimensional cases and arbitrary dilation matrices, remains challenging. 

The design of wavelet filter banks is closely related to the construction of wavelet frames \cite{RS1,RS2,HanOrtho}, which provide a useful alternative to wavelet bases, as they allow flexibility in construction while preserving many important properties. Recent advances in wavelet frame construction have employed the sum of squares (SOS) representation, originally developed for polynomials, and adapted it for trigonometric polynomials, providing a systematic method to construct tight wavelet frames \cite{LaiStock,AlgPersI}. These methods, while useful, often require solving specific factorization problems involving trigonometric polynomials, which can be quite challenging for general dilation matrices or multidimensions. More precisely, we need to find SOS generators that satisfy the SOS condition as defined in Eq.~(\ref{eq:SOS}) for a given lowpass filter. These construction methods for tight wavelet frames are based on the equivalence between the SOS condition and the unitary extension principle (UEP) condition~\cite{RS1}. 

In this paper, we present an easy method for designing wavelet filter banks for any dilation matrices and in any dimension. The key idea of our design method lies in the use of what we term the sum of vanishing products (SVP) condition (cf. Eq. (\ref{eq:SVP})) for a given pair of lowpass filters, which subsumes the SOS condition as a special case. To this end, we extend and use the Laplacian pyramid matrices studied previously \cite{SLP}. Throughout this paper, and in the process of establishing these extended Laplacian pyramid matrices, we utilize the so-called polyphase representation~\cite{DV2,Hur1} of filters.

The remainder of this paper is organized as follows. Section~\ref{subS:polyphase} reviews some preliminaries, including filters and Laplacian pyramid matrices, particularly in terms of the polyphase representation. Section~\ref{subS:SOSandWF} provides a quick review of wavelet filter bank design and SOS-based tight wavelet frames. Section~\ref{S:main} presents our main results. Section~\ref{subS:SVPandELP} introduces the concept of the sum of vanishing products (SVP) condition and the extended Laplacian pyramid matrices. Section~\ref{subS:SVPeqMUEP} establishes the equivalence between the SVP condition and the mixed unitary extension principle (MUEP) condition~\cite{RS2}; this equivalence is simple yet powerful, allowing the flexible design of wavelet filter banks. Finally, Section~\ref{S:examples} concludes the paper with examples illustrating our main findings.

\section{Preliminaries}
\label{S:prelim}
\subsection{Filters and Laplacian pyramid matrices}
\label{subS:polyphase}
Let \(\dil\) denote the dilation matrix for sampling. That is, \(\dil\) is an
\(\dm\times\dm\) integer matrix whose eigenvalues are strictly larger than \(1\) in magnitude.
Let \(q:=|\det\dil|\). In this paper, we assume that the filter \(f:\Zd\to\RR\) is finitely supported. The filter \(f\) is called {\it highpass} if 
\(
\sum_{m\in\Zd}f(m)=0,
\)
and {\it lowpass} if 
\(
\sum_{m\in\Zd}f(m)=\sqrt{q}=\sqrt{|\det\dil|}.
\)

Let \(\capset\) be the complete set of distinct coset representatives of \(\Zd/\dil\Zd\) containing \(0\), and \(\dualset\) be the complete set of distinct coset representatives of \((2\pi)(((\dil^T)^{-1}\Zd)/\Zd)\) containing \(0\). Here, \(T\) is used for the matrix transpose. We will also use \(\nu_0,\nu_1,\dots,\nu_{q-1}\) and \(\gam_0,\gam_1,\dots,\gam_{q-1}\) to denote the elements of \(\capset\) and \(\dualset\), respectively, so that
\[\capset=\{\nu_0:=0,\nu_1,\dots,\nu_{q-1}\},\quad \dualset=\{\gam_0:=0,\gam_1,\dots,\gam_{q-1}\}.\]
For example, for the 2-D dyadic case, i.e., when \(\dm=2\) and \(\dil=2{\tt I}_2\), the sets \(\capset=\{(0,0),(1,0),(0,1),(1,1)\}\) and \(\dualset=\{(0,0),(\pi,0),(0,\pi),(\pi,\pi)\}\) can be used.
Here, \({\tt I}_m\) is used to denote the \(m\times m\) identity matrix.

For a given dilation matrix \(\dil\), the {\it polyphase decomposition} \cite{V} of a filter \(f\) is a set of \(q\) sub-filters, \(f_\nu\), \(\nu\in\capset\),
that are defined as
\[f_\nu(m):=f(\dil m-\nu), \quad\forall m\in\Zd.\]
Let \(\Td:=\{z=[z_1,\dots,z_\dm]^T\in \CC^\dm: |z_i|=1\}\). We recall that the {\it \(z\)-transform} \(F(z)\) of \(f:\Zd\to\RR\) is defined as
\[F(z):=\sum_{m\in\Zd} f(m) z^{-m},\quad z\in \Td,\]
where, for \(z=[z_1,\dots,z_\dm]^T\) and \(m=[m_1,\dots,m_\dm]^T\in\Zd\), \(z^m:=\prod_{j=1}^\dm z_j^{m_j}\). Note that since \(f\) is assumed to be finitely supported, \(F(z)\) is a Laurent polynomial in \(z\).

It is easy to see that the \(z\)-transform of \(f\) can be written as
\begin{equation}
\label{eq:zpoly}
F(z)=\sum_{\nu\in\capset}z^{\nu}F_\nu(z^\dil)
\end{equation}
where \(F_\nu(z)\) is the \(z\)-transform of \(f_\nu\), and \(z^\dil\) is defined as \([z^{\dil_1},\dots,z^{\dil_\dm}]^T\) with \(\dil_1, \dots, \dil_\dm\) being the column vectors of \(\dil\). The {\it polyphase representation} \cite{DV2,Hur1} of the filter \(f\) is defined as \({\tt F}(z)=[F_{\nu_0}(z),F_{\nu_1}(z),\dots,F_{\nu_{q-1}}(z)]\), which is a row vector of length \(q\).

The Laplacian pyramid algorithm is introduced in \cite{BA}. For a pair of lowpass filters \(h\) and \(g\) with the polyphase representations \({\tt H}(z)\) and \({\tt G}(z)\), respectively, the following \((q+1)\times q\) matrix obtained from the Laplacian pyramid analysis algorithm
\begin{equation*}
\left[\begin{array}{c}
                {\tt H}(z)\\
                {\tt I}_q-{\tt G}(z)^\ast{\tt H}(z)
               \end{array}\right], \quad z\in\Td,
\end{equation*}
is studied in \cite{DV2,Hur1}. Here, \({\tt G}(z)^\ast\) denotes the conjugate transpose of \({\tt G}(z)\), i.e., \({\tt G}(z)^\ast=\overline{{\tt G}(z)}^T={\tt G}(z^{-1})^T\). We refer to the above matrix as the {\it Laplacian pyramid matrix} and denote it by \(\Phi_{h,g}(z)\). 

It is well known that if the two lowpass filters \(h\) and \(g\) are {\it biorthogonal}, i.e., if \({\tt H}(z){\tt G}(z)^\ast=1\), then we have the identity
\begin{equation}
\label{eq:bior}
\Phi_{g,h}(z)^\ast\Phi_{h,g}(z)=\left[\begin{array}{c}
                {\tt G}(z)^\ast, \; {\tt I}_q-{\tt G}(z)^\ast{\tt H}(z)
               \end{array}\right]\Phi_{h,g}(z)={\tt I}_q,\quad \forall z\in\Td,
\end{equation}
and this identity has been used in designing wavelet filter banks~\cite{DV2,Hur1}. 

In the special case when \(g=h\), the Laplacian pyramid matrix becomes
\[\Phi_{h}(z):=\Phi_{h,h}(z)
=\left[\begin{array}{c}
                {\tt H}(z)\\
                {\tt I}_q-{\tt H}(z)^\ast{\tt H}(z)
               \end{array}\right],\]
and if, in addition, \(h\) is orthogonal so that \({\tt H}(z){\tt H}(z)^\ast=1\), then we have
\[\Phi_{h}(z)^\ast\Phi_{h}(z)={\tt I}_q,\quad \forall z\in\Td.\]
In other words, in this case, the Laplacian pyramid matrix \(\Phi_{h}(z)\) satisfies the unitary condition. 
New construction methods for tight wavelet frames have been studied under this setting \cite{DV2}, and in a more general setting without the unitary condition of \(\Phi_{h}(z)\), using the notion of scalable frames \cite{SLP}.

A trigonometric polynomial \(\rho:=\hat{f}\) is called the {\it mask} associated with a filter \(f\) if, for all \(\ome\in\Theta^\dm:=[-\pi,\pi]^\dm\),
\begin{equation}
\label{eq:maskdef}
\rho(\ome):=\hat{f}(\ome):=\frac{1}{\sqrt{|\det \dil|}} F(e^{i\ome})=\frac{1}{\sqrt{|\det \dil|}}\sum_{m\in\Zd} f(m) e^{-i\ome\cdot m},
\end{equation}
where \(F(z)\) is the \(z\)-transform of \(f\).
Note that, given a filter \(f\), the associated mask \(\rho(\ome)=\hat{f}(\ome)\) is the Fourier transform of the filter \(f\). We see that Eq.~(\ref{eq:zpoly}) can also be written as, for all \(\ome \in \Theta^\dm\),
\begin{equation}
\label{eq:omepoly}
F(e^{i\ome})=\sum_{\nu\in\capset}e^{i\ome\cdot\nu}F_\nu(e^{i\dil^T\ome}), \hbox{ or equivalently, } \hat{f}(\ome)=\sum_{\nu\in\capset}e^{i\ome\cdot\nu}\hat{f_\nu}(\dil^T\ome).
\end{equation}

We say the mask \(\rho=\hat{f}\) is highpass (respectively, lowpass) if the associated filter \(f\) is highpass (respectively, lowpass). Thus, the mask \(\rho\) is lowpass if and only if \(\rho(0)=1\), which in turn is equivalent to \(F({\tt 1})=\sqrt{|\det \dil|}\). Here and below, \({\tt 1}:=[1,\cdots,1]^T\in\RR^n\) denotes the column vector of ones.

In this paper, we will assume that a lowpass filter\ \(f\) always satisfies \(\hat{f}(\gam)=0\) for all \(\gam\in\dualset\bks\{0\}\), since almost all lowpass filters used in practice meet this condition.

\subsection{Wavelet filter bank design and SOS-based tight wavelet frames}
\label{subS:SOSandWF}
Let \(h\) be a lowpass filter and \(h_1,\cdots, h_s\)  be highpass filters. We recall that the set of filters \(\{h, h_1,\cdots, h_s\}\) is called a {\it wavelet filter bank}.

For a lowpass filter \(g\) and highpass filters \(h_1^d,\cdots, h_s^d\), we refer to the wavelet filter banks 
\(\{h, h_1,\cdots, h_s\}\) and \(\{g, h_1^d,\cdots, h_s^d\}\) as the {\it primal wavelet filter bank} and the {\it dual wavelet filter bank}, respectively, if 
their Fourier transforms satisfy 
\begin{equation}
\label{eq:origMUEP}
\hat{h}(\ome)\overline{\hat{g}(\ome+\gam)}+\sum_{i=1}^{s}\hat{h}_i(\ome)\overline{\hat{h}_i^d(\ome+\gam)}=\delta_{0\gam},~\forall \gam\in\dualset, \forall\ome \in \Theta^\dm.
\end{equation}
This condition is known as the mixed unitary extension principle (MUEP) condition~\cite{RS2}.
Here and below, \(\del_{ij}=1\) if \(i=j\), and \(\del_{ij}=0\) otherwise. 

It is easy to see that the MUEP condition (\ref{eq:origMUEP}) is equivalent to the matrix identity 
\begin{equation}
\label{eq:matrixMUEP}
\left[\begin{array}{c}
                \overrightarrow{g}(\ome)\\
                \overrightarrow{h_1^d}(\ome)\\
                \cdots\\
                \overrightarrow{h_s^d}(\ome)\\
               \end{array}\right]^\ast
               \left[\begin{array}{c}
                \overrightarrow{h}(\ome)\\
                \overrightarrow{h_1}(\ome)\\
                \cdots\\
                \overrightarrow{h_s}(\ome)\\
               \end{array}\right]={\tt I}_q, 
\end{equation}
where, for a filter \(f\), \(\overrightarrow{f}(\ome)\) is used to denote \([\hat{f}(\ome),\hat{f}(\ome+\gam_1),\dots,\hat{f}(\ome+\gam_{q-1})]\).

In the special case where \(g=h\), and \(h_i^d =h_i\) for all \(i=1,\cdots, s\), which satisfy the unitary extension principle (UEP) condition~\cite{RS1}
\begin{equation}
\label{eq:origUEP}
\hat{h}(\ome)\overline{\hat{h}(\ome+\gam)}+\sum_{i=1}^{s}\hat{h}_i(\ome)\overline{\hat{h}_i(\ome+\gam)}=\delta_{0\gam},~\forall \gam\in\dualset, \forall\ome \in \Theta^\dm,
\end{equation}
we say that \(\{h, h_1,\cdots, h_s\}\) is a {\it tight wavelet filter bank}.

Let \({\tt H}(z)\), \({\tt G}(z)\), \({\tt H}_i(z)\), and \({\tt H}_i^d(z)\) be the polyphase representation of \(h\), \(g\), \(h_i\) and \(h_i^d\), respectively, for \(i=1,\cdots,s\). Then \(\{h, h_1,\cdots, h_s\}\) and \(\{g, h_1^d,\cdots, h_s^d\}\) satisfy the MUEP condition~(\ref{eq:origMUEP}) (or equivalently (\ref{eq:matrixMUEP})) if and only if, for all \(z\in\Td\),
\begin{equation}
\label{eq:polyMUEP}
\left[\begin{array}{c}
                {\tt G}(z)^\ast, \; {\tt H}_1^d(z)^\ast, \; \cdots, \; {\tt H}_s^d(z)^\ast
               \end{array}\right]
               \left[\begin{array}{c}
                {\tt H}(z)\\
                {\tt H}_1(z)\\
                \cdots\\
                {\tt H}_s(z)\\
               \end{array}\right]={\tt I}_q.
\end{equation}         

In fact, to get the matrix version of the MUEP condition in Eq.~(\ref{eq:matrixMUEP}) from the condition (\ref{eq:polyMUEP}), we recall that the Fourier transform matrix defined as
\begin{equation}
\label{eq:FTmatrix}
X(\ome):=\frac{1}{\sqrt{|\\ \det \dil|}}[e^{i(\ome+\gam)\cdot\nu}]_{\nu\in\capset, \gam\in\dualset}
\end{equation}
satisfies \(X(\ome)^\ast X(\ome)={\tt I}_q\), for all \(\ome\in\Theta^\dm\). Using this matrix and Eq. (\ref{eq:omepoly}), we get, for \(k=0,1,\cdots,q-1\),
\[\left({\tt H}(e^{i\Lam^T\ome})X(\ome)\right)_{k}=\frac{1}{\sqrt{\left| \det \dil\right|}}\sum_{l=0}^{q-1}e^{i(\ome+\gam_k)\cdot\nu_l}(H)_{\nu_l}(e^{i\Lam^T(\ome+\gam_k)})=\hat{h}(\ome+\gam_k).\]
As a result, we have
\begin{equation}
\label{eq:HXeqhvec}
{\tt H}(e^{i\Lam^T\ome})X(\ome)=\overrightarrow{h}(\ome),\quad {\tt H}_j(e^{i\Lam^T\ome})X(\ome)=\overrightarrow{h_j}(\ome), \quad  \hbox{ for }j=1,\cdots, s.
\end{equation}
This shows that the condition (\ref{eq:polyMUEP}) implies the condition (\ref{eq:matrixMUEP}), and the converse direction can be obtained similarly.

Let \(\Psi\) be a finite subset of \(L^2(\Rd)\) and let \(X(\Psi)\) be a wavelet system generated by the set \(\Psi\), defined as follows: 
\begin{eqnarray*}
X(\Psi)=\{|\det(\Lam)|^{\frac{j}{2}}\psi(\Lam^j\cdot-m): \psi\in\Psi, j\in\mathbb{Z},\phantom{,}m\in\Zd\}.    
\end{eqnarray*}
A wavelet system \(X(\Psi)\) is called a {\it frame} if there are constants \(C_1,C_2 >0\) such that 
\begin{eqnarray*} C_1\bigr{|}\bigr{|}f\bigr{|}\bigr{|}_{L^2}^2 \le \sum_{x\in X(\Psi)} |\left<f,x\right>|^2 \le C_2\bigr{|}\bigr{|}f\bigr{|}\bigr{|}_{L^2}^2, \quad \forall f \in L^2(\Rd). \end{eqnarray*} 
As a special case, the wavelet system \(X(\Psi)\) is called a {\it tight frame} if, for some constant \(C\),
\begin{eqnarray*} \bigr{|}\bigr{|}f\bigr{|}\bigr{|}_{L^2}^2 = C \sum_{x\in X(\Psi)} |\left<f,x\right>|^2, \quad \forall f \in L^2(\Rd). \end{eqnarray*}
The following well-known result states that the UEP condition (\ref{eq:origUEP}) is sufficient to obtain a tight wavelet frame~\cite{RS1,HanOrtho}, written in a way specifically tailored to our setting.
\begin{result} 
\label{result:UEP}
Suppose \(\{h, h_1,\cdots, h_s\}\) is a tight wavelet filter bank. 
Let \(\phi\in L^2(\Rd)\) be the refinable function associated with the lowpass filter \(h\), in the sense that \(\hat\phi (\dil\ome)=\hat{h}(\ome)\hat\phi(\ome)\), and let \(\Psi:=\{\psi_1,\cdots, \psi_s\}\) be the mother wavelet set, where \(\psi_i\) is defined via \(\hat\psi_i (\dil\ome):=\hat{h_i}(\ome)\hat\phi(\ome)\). 
Then \(X(\Psi)\) is a tight wavelet frame in \(L^2(\Rd)\).
\end{result}

A nonnegative trigonometric polynomial \(\rho\) is said to have the {\it sum of squares (SOS) representation} if there are trigonometric polynomials \(\rho_1,\cdots,\rho_{J}\)  such that
\[\rho(\ome) = \sum_{j=1}^J|\rho_j(\ome)|^2, \quad \forall \ome \in \Theta^\dm.\]
For the wavelet construction, the relevant SOS problem \cite{LaiStock,AlgPersI} is for the case when \(\rho(\ome)\) is specifically chosen as
\[\rho(\ome):=1-\sum_{\gam\in{\dualset}}|\hat{h}(\ome+\gam)|^2,\]
for a lowpass filter \(h\). 
Equivalently, this SOS problem can be written via the \(z\)-transform \(H_\nu(z)\) of the polyphase decomposition \(h_\nu\), \(\nu\in\capset\), of the lowpass filter \(h\) as follows: there exist Laurent polynomials \(p_j\), \(j=1,\cdots,J'\) such that
\[1-\sum_{\nu\in{\capset}}|H_\nu(z)|^2=\sum_{j=1}^{J'}|p_j(z)|^2, \quad \forall z\in\Td.\]

The exact process of obtaining this condition for \(z\in\Td\) from the relevant SOS problem for \(\ome\in\Theta^\dm\) mentioned above is well known (see, for example,~\cite{LaiStock,AlgPersI}). We generalize this process in the next section (cf. Lemma~\ref{lem:svpinome}) and use it to prove one of our main results, Theorem~\ref{thm:muepthensvp}. 

The following result from ~\cite{LaiStock,AlgPersI} is about a tight wavelet filter bank design using the SOS representation, stated in a way adapted to our setting. 
\begin{result} 
\label{result:SOStoUEP} 
Let \(h\) be a lowpass filter, and let \({\tt H}(z)=[H_{\nu_0}(z),\dots,H_{\nu_{q-1}}(z)]\) be its polyphase representation.
Suppose that \(h\) is not orthogonal, i.e., \({\tt H}(z){\tt H}(z)^\ast\ne 1\), and that  there exist nonzero Laurent polynomials \(p_1,\cdots,p_{J}\) such that
\begin{equation}
\label{eq:SOS}
1-{\tt H}(z){\tt H}(z)^\ast=\sum_{j=1}^{J}|p_j(z)|^2,\quad z\in\Td.
\end{equation} 
Define highpass filters \(h_1,\cdots, h_{J+q}\) as follows:
\begin{eqnarray*}
\hat{h_j}(\ome)&:=&\hat{h} (\ome)\overline{{p_j}(e^{i\dil^T\ome})},\quad j=1,\cdots, J,\\
\hat{h_{J+1+m}}(\ome)&:=&q^{-1/2}e^{i\nu_m\cdot\ome}- \hat{h} (\ome)\overline{H_{\nu_m}(e^{i\dil^T\ome})},\quad m=0,\cdots, q-1.
\end{eqnarray*}
Then \(\{h, h_1,\cdots, h_{J+q}\}\) forms a tight wavelet filter bank.
\end{result}

In this paper, we refer to the above condition in Eq.~(\ref{eq:SOS}) as the {\it SOS condition} for the lowpass filter \(h\), and to the functions \(p_1,\cdots,p_{J}\) as the {\it SOS generators}.
By combining this result with Result~\ref{result:UEP}, we see that if a lowpass filter \(h\) satisfies the SOS condition (\ref{eq:SOS}), then it gives rise to a tight wavelet frame in \(L^2(\Rd)\).

Note that for the filter \(h\) to satisfy the SOS condition, it is necessary that
\begin{equation*}
1-{\tt H}(z){\tt H}(z)^\ast\ge 0,\quad z\in\Td,
\end{equation*}
which is known as the sub-QMF condition for the filter \(h\) in the literature.

\section{Wavelet filter banks using extended Laplacian pyramid matrices}
\label{S:main}
\subsection{The SVP condition and extended Laplacian pyramid matrices}
\label{subS:SVPandELP}
\begin{definition}
Let \(h\) and \(g\) be lowpass filters with the polyphase representations \({\tt H}(z)\) and \({\tt G}(z)\), respectively.
We say that the lowpass filters \(h\) and \(g\) satisfy the sum of vanishing products (SVP) condition if there exist Laurent polynomials \(k_j\) and \(l_j\) satisfying \(k_j({\tt 1})=l_j({\tt 1})=0\), for all \(1\le j\le J\), such that
\begin{equation}
\label{eq:SVP}
1-{\tt H}(z){\tt G}(z)^\ast=\sum_{j=1}^{J} k_j(z)\overline{l_j(z)},\quad \forall z\in\Td.
\end{equation}
\end{definition}

We refer to the above functions \(k_1,\cdots,k_{J}\) and \(l_1,\cdots,l_{J}\) as the {\it SVP generators}.

Clearly, if a lowpass filter \(h\) satisfies the SOS condition (\ref{eq:SOS}) with \(p_j\) as SOS generators, then we get the SVP condition (\ref{eq:SVP}), with \(g\) chosen the same as \(h\), and \(p_j\) as the SVP generators.

\begin{definition}
Let \(h\) and \(g\) be lowpass filters with the polyphase representations \({\tt H}(z)\) and \({\tt G}(z)\), respectively.
Let \(l_j\), \(j=1,\cdots,J\), be the Laurent polynomials with \(l_j({\tt 1})=0\). 
Then we define the {\it extended Laplacian pyramid matrix} \(\Phi_{h,g, [l_1,\cdots,l_{J}]}(z)\) as the following \((q+J+1)\times q\) matrix:
\begin{equation*}
\Phi_{h,g, [l_1,\cdots,l_{J}]}(z)
:=\left[\begin{array}{c}
                {\tt H}(z)\\
               \overline{l_1(z)}{\tt H}(z)\\
               \vdots\\
               \overline{l_J(z)}{\tt H}(z)\\
                {\tt I}_q-{\tt G}(z)^\ast{\tt H}(z)
               \end{array}\right].
\end{equation*} 

\end{definition}
The following simple lemma is a key to our design method in this paper.

\begin{lemma}
\label{lem:core}
Let \(h\) and \(g\) be lowpass filters that satisfy the SVP condition (\ref{eq:SVP}) with generators \(k_1,\cdots,k_{J}\) and \(l_1,\cdots,l_{J}\).
Then the extended Laplacian pyramid matrices \(\Phi_{h,g, [l_1,\cdots,l_{J}]}(z)\) and \(\Phi_{g,h, [k_1,\cdots,k_{J}]}(z)\) satisfy the identity
\begin{equation}
\label{eq:core}
\Phi_{g,h, [k_1,\cdots,k_{J}]}(z)^\ast\Phi_{h,g, [l_1,\cdots,l_{J}]}(z)={\tt I}_q, \quad\forall z\in\Td.
\end{equation}
\end{lemma}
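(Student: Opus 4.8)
The plan is to compute the product $\Phi_{g,h,[k_1,\dots,k_J]}(z)^\ast\Phi_{h,g,[l_1,\dots,l_J]}(z)$ directly by block multiplication and show that all the stray terms telescope into the SVP identity. First I would write out the conjugate transpose of the first matrix: since
\[
\Phi_{g,h,[k_1,\dots,k_J]}(z)=\left[\begin{array}{c}{\tt G}(z)\\ \overline{k_1(z)}{\tt G}(z)\\ \vdots\\ \overline{k_J(z)}{\tt G}(z)\\ {\tt I}_q-{\tt H}(z)^\ast{\tt G}(z)\end{array}\right],
\]
its conjugate transpose is the $q\times(q+J+1)$ matrix
\[
\big[\ {\tt G}(z)^\ast,\ k_1(z){\tt G}(z)^\ast,\ \dots,\ k_J(z){\tt G}(z)^\ast,\ {\tt I}_q-{\tt G}(z)^\ast{\tt H}(z)\ \big],
\]
using that $\big(\overline{k_j(z)}{\tt G}(z)\big)^\ast=k_j(z){\tt G}(z)^\ast$ on $\Td$ and that $\big({\tt I}_q-{\tt H}(z)^\ast{\tt G}(z)\big)^\ast={\tt I}_q-{\tt G}(z)^\ast{\tt H}(z)$.

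Next I would multiply this row of blocks against the column of blocks defining $\Phi_{h,g,[l_1,\dots,l_J]}(z)$, namely ${\tt H}(z)$, then $\overline{l_1(z)}{\tt H}(z),\dots,\overline{l_J(z)}{\tt H}(z)$, then ${\tt I}_q-{\tt G}(z)^\ast{\tt H}(z)$. Grouping the contributions, the product equals
\[
{\tt G}(z)^\ast{\tt H}(z)\ +\ \Big(\sum_{j=1}^J k_j(z)\overline{l_j(z)}\Big){\tt G}(z)^\ast{\tt H}(z)\ +\ \big({\tt I}_q-{\tt G}(z)^\ast{\tt H}(z)\big)\big({\tt I}_q-{\tt G}(z)^\ast{\tt H}(z)\big).
\]
Write $P(z):={\tt G}(z)^\ast{\tt H}(z)$, a $q\times q$ matrix, and $\sigma(z):=\sum_{j=1}^J k_j(z)\overline{l_j(z)}$, a scalar. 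The SVP condition (\ref{eq:SVP}) says precisely that $1-{\tt H}(z){\tt G}(z)^\ast=\sigma(z)$; note that ${\tt H}(z){\tt G}(z)^\ast$ is a scalar (a $1\times 1$ product), and it equals the trace-free... more usefully, the key algebraic fact I need is that $P(z)$ is "rank-one-like" in the sense that $P(z)^2={\tt G}(z)^\ast\big({\tt H}(z){\tt G}(z)^\ast\big){\tt H}(z)=\big({\tt H}(z){\tt G}(z)^\ast\big)P(z)=(1-\sigma(z))P(z)$. Using this, the third term expands to ${\tt I}_q-2P(z)+P(z)^2={\tt I}_q-2P(z)+(1-\sigma(z))P(z)$, and the whole sum becomes $P(z)+\sigma(z)P(z)+{\tt I}_q-2P(z)+(1-\sigma(z))P(z)={\tt I}_q$, as all the $P(z)$ terms cancel. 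This proves (\ref{eq:core}).

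The only mildly delicate points — and the place I would be most careful — are the bookkeeping of conjugates on $\Td$ (that $\overline{F(z)}=F(z^{-1})$ for a Laurent polynomial with real coefficients, so that conjugate-transpose behaves as claimed) and the observation that the scalar quantity ${\tt H}(z){\tt G}(z)^\ast=1-\sigma(z)$ can be pulled out of the matrix products to collapse $P(z)^2$. Note that the hypotheses $k_j({\tt 1})=l_j({\tt 1})=0$ play no role in this particular identity — they are needed elsewhere (to guarantee the rows $\overline{l_j(z)}{\tt H}(z)$ correspond to genuine highpass filters), so I would simply not invoke them here. No appeal to the lowpass normalization of $h$ or $g$ is needed either; (\ref{eq:core}) is a purely formal consequence of (\ref{eq:SVP}).
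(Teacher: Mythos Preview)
Your proposal is correct and follows essentially the same route as the paper: both compute the block product directly, collect the middle $J$ terms into the scalar $\sum_j k_j(z)\overline{l_j(z)}=1-{\tt H}(z){\tt G}(z)^\ast$, and then use that ${\tt H}(z){\tt G}(z)^\ast$ is a scalar to collapse $\big({\tt I}_q-{\tt G}(z)^\ast{\tt H}(z)\big)^2$ and cancel all ${\tt G}(z)^\ast{\tt H}(z)$ terms. Your shorthand $P(z)$, $\sigma(z)$ and the identity $P(z)^2=(1-\sigma(z))P(z)$ are just a repackaging of the paper's line ${\tt G}(z)^\ast\big(2-{\tt H}(z){\tt G}(z)^\ast-2+{\tt H}(z){\tt G}(z)^\ast\big){\tt H}(z)+{\tt I}_q={\tt I}_q$; your side remarks that the vanishing conditions $k_j({\tt 1})=l_j({\tt 1})=0$ and the lowpass normalizations are not used here are also accurate.
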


\begin{proof}
From (\ref{eq:SVP}), we see that 
\begin{eqnarray*}
[k_1(z),\cdots,k_{J}(z)][l_1(z),\cdots,l_{J}(z)]^\ast=1-{\tt H}(z){\tt G}(z)^\ast .\end{eqnarray*} Using this identity, we have
\begin{eqnarray*}
&{\;}&\Phi_{g,h, [k_1,\cdots,k_{J}]}(z)^\ast\Phi_{h,g, [l_1,\cdots,l_{J}]}(z)\\
&=&{\tt G}(z)^\ast{\tt H}(z)+{\tt G}(z)^\ast([k_1(z),\cdots,k_{J}(z)][l_1(z),\cdots,l_{J}(z)]^\ast){\tt H}(z)\\
&+&\left({\tt I}_q-{\tt G}(z)^\ast{\tt H}(z)\right)\left({\tt I}_q-{\tt G}(z)^\ast{\tt H}(z)\right)\\
&=&{\tt G}(z)^\ast{\tt H}(z)+{\tt G}(z)^\ast(1-{\tt H}(z){\tt G}(z)^\ast){\tt H}(z)\\
&+&\left({\tt I}_q-{\tt G}(z)^\ast{\tt H}(z)\right)\left({\tt I}_q-{\tt G}(z)^\ast{\tt H}(z)\right)\\
&=&{\tt G}(z)^\ast\left(2-{\tt H}(z){\tt G}(z)^\ast-2+{\tt H}(z){\tt G}(z)^\ast\right){\tt H}(z)+{\tt I}_q={\tt I}_q,
\end{eqnarray*}
which completes the proof.
\end{proof}

Note that if \(h\) and \(g\) are biorthogonal, i.e., \(1-{\tt H}(z){\tt G}(z)^\ast=0\), then the SVP condition (\ref{eq:SVP}) holds true trivially, with all the generators chosen as the zero Laurent polynomial. In this case, the identity in Eq.~(\ref{eq:core}) reduces to the previous identity in Eq.~(\ref{eq:bior}). Therefore, in this paper, we are primarily interested in the wavelet filter bank design when the two lowpass filters that we start with are not biorthogonal.

\subsection{Equivalence between the SVP and MUEP conditions}
\label{subS:SVPeqMUEP}
In this subsection, we show the equivalence between the SVP condition (\ref{eq:SVP}) and the MUEP condition (\ref{eq:origMUEP}). 
We start with the direction of deriving the MUEP condition from the SVP condition.
\begin{theorem}
\label{thm:main}
Let \(h\), \(g\) be lowpass filters, and let \({\tt H}(z)=[H_{\nu_0}(z),\dots,H_{\nu_{q-1}}(z)]\) and \({\tt G}(z)=[G_{\nu_0}(z),\dots,G_{\nu_{q-1}}(z)]\) be their polyphase representations, respectively. Suppose that \(h\) and \(g\) are not biorthogonal, i.e., \({\tt H}(z){\tt G}(z)^\ast\ne 1\), and that  there exist nonzero generators \(k_1,\cdots,k_{J}\) and \(l_1,\cdots,l_{J}\) for the SVP condition (\ref{eq:SVP}).
Define highpass filters \(h_1,\cdots, h_{J+q}\) as follows:
\begin{eqnarray}
\label{eq:hj}
\hat{h_j}(\ome)&:=&\hat{h} (\ome)\overline{{l_j}(e^{i\dil^T\ome})},\quad j=1,\cdots, J,\\
\hat{h_{J+1+m}}(\ome)&:=&q^{-1/2}e^{i\nu_m\cdot\ome}- \hat{h} (\ome)\overline{G_{\nu_m}(e^{i\dil^T\ome})},\quad m=0,\cdots, q-1,
\label{eq:hm}
\end{eqnarray}
and highpass filters \(h_1^d,\cdots, h_{J+q}^d\) as follows:
\begin{eqnarray*}
\hat{h_j^d}(\ome)&:=&\hat{g} (\ome)\overline{{k_j}(e^{i\dil^T\ome})},\quad j=1,\cdots, J,\\
\hat{h_{J+1+m}^d}(\ome)&:=&q^{-1/2}e^{i\nu_m\cdot\ome}- \hat{g} (\ome)\overline{H_{\nu_m}(e^{i\dil^T\ome})},\quad m=0,\cdots, q-1.
\end{eqnarray*}
Then \(\{h, h_1,\cdots, h_{J+q}\}\) and \(\{g, h_1^d,\cdots, h_{J+q}^d\}\) are the primal and dual wavelet filter banks, respectively. 
\end{theorem}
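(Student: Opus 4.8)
The plan is to verify the matrix identity~(\ref{eq:polyMUEP}) for the proposed filters using Lemma~\ref{lem:core}, and then invoke the already-established equivalence between~(\ref{eq:polyMUEP}) and the MUEP condition~(\ref{eq:origMUEP}). First I would read off, from the definitions~(\ref{eq:hj})--(\ref{eq:hm}) and the analogous ones for $h_j^d$, what the polyphase representations of the new highpass filters are. Using Eq.~(\ref{eq:omepoly}) (equivalently, the factorization $F(z)=\sum_{\nu}z^\nu F_\nu(z^\dil)$), the relation $\hat{h_j}(\ome)=\hat h(\ome)\overline{l_j(e^{i\dil^T\ome})}$ translates into ${\tt H}_j(z)=\overline{l_j(z)}\,{\tt H}(z)$ for $j=1,\dots,J$, since multiplication by a function of $z^\dil$ only rescales each polyphase component. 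Similarly, the relation $\hat{h_{J+1+m}}(\ome)=q^{-1/2}e^{i\nu_m\cdot\ome}-\hat h(\ome)\overline{G_{\nu_m}(e^{i\dil^T\ome})}$ translates into the statement that the rows ${\tt H}_{J+1+m}(z)$, stacked for $m=0,\dots,q-1$, form exactly the matrix ${\tt I}_q-{\tt G}(z)^\ast{\tt H}(z)$; the term $q^{-1/2}e^{i\nu_m\cdot\ome}$ contributes the $m$-th standard basis row, and the remaining term contributes $-\overline{G_{\nu_m}(z)}{\tt H}(z)$. Doing the same for the dual filters with the roles of $h,g$ and $l_j,k_j$ swapped, I conclude that
\[
\left[\begin{array}{c}{\tt H}(z)\\ {\tt H}_1(z)\\ \vdots\\ {\tt H}_{J+q}(z)\end{array}\right]=\Phi_{h,g,[l_1,\dots,l_J]}(z),\qquad
\left[\begin{array}{c}{\tt G}(z)\\ {\tt H}_1^d(z)\\ \vdots\\ {\tt H}_{J+q}^d(z)\end{array}\right]=\Phi_{g,h,[k_1,\dots,k_J]}(z).
\]

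With this identification in hand, Lemma~\ref{lem:core} gives immediately that the product of the conjugate transpose of the second stacked matrix with the first equals ${\tt I}_q$ for all $z\in\Td$, which is precisely the polyphase MUEP identity~(\ref{eq:polyMUEP}) for the pair of filter banks $\{h,h_1,\dots,h_{J+q}\}$ and $\{g,h_1^d,\dots,h_{J+q}^d\}$. Since~(\ref{eq:polyMUEP}) is equivalent to~(\ref{eq:matrixMUEP}) and hence to the MUEP condition~(\ref{eq:origMUEP}) — this equivalence having been spelled out in Section~\ref{subS:SOSandWF} via the Fourier transform matrix $X(\ome)$ and Eq.~(\ref{eq:HXeqhvec}) — the two filter banks are indeed primal and dual wavelet filter banks. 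It remains only to check that $h_1,\dots,h_{J+q}$ (and likewise the $h_j^d$) really are highpass filters, i.e., that each associated mask vanishes at $\ome=0$: for $j\le J$ this follows from $l_j({\tt 1})=0$ (resp.\ $k_j({\tt 1})=0$), and for $j=J+1+m$ it follows from evaluating at $\ome=0$ and using that $h$ (resp.\ $g$) is lowpass so that $\hat h(0)=1$ and $\sum_m q^{-1/2}\overline{H_{\nu_m}({\tt 1})}=\hat h(0)=1$, whence each coordinate cancels; I would state this cleanly using $\hat f(0)=\sum_{\nu}\hat f_\nu({\tt 1})/\sqrt q$ type bookkeeping, or simply note that $\overrightarrow{h_j}(0)\ne 0$ is impossible because row sums of ${\tt I}_q-{\tt G}({\tt 1})^\ast{\tt H}({\tt 1})$ are zero.

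The only genuinely non-trivial step is the translation between the frequency-domain definitions of the $h_j$ and the polyphase-domain rows of the extended Laplacian pyramid matrix; everything after that is a direct appeal to Lemma~\ref{lem:core} and the stated equivalence~(\ref{eq:polyMUEP})$\Leftrightarrow$(\ref{eq:origMUEP}). The main obstacle, then, is bookkeeping: carefully matching indices (the offset $J+1+m$ versus $m=0,\dots,q-1$ contributing $q$ rows) and confirming that multiplying a mask by $\overline{l_j(e^{i\dil^T\ome})}$ corresponds on the polyphase side to multiplying the whole row vector ${\tt H}(z)$ by the scalar $\overline{l_j(z)}$ — this uses precisely that $l_j(e^{i\dil^T\ome})$ is a function of the upsampled variable $z^\dil$, so it commutes with the polyphase splitting. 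I would present the filters-to-matrix correspondence as a short explicit computation (one line per block of rows), then finish in two sentences by quoting Lemma~\ref{lem:core} and the equivalence of~(\ref{eq:polyMUEP}) with the MUEP condition.
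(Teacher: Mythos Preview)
Your approach is correct and essentially the same as the paper's: both proofs identify the stacked filter matrices with the extended Laplacian pyramid matrices and then invoke Lemma~\ref{lem:core}. The only cosmetic difference is that the paper carries out the identification on the frequency side (multiplying by the Fourier matrix $X(\ome)$ and landing on~(\ref{eq:matrixMUEP}) directly), whereas you do it on the polyphase side and then cite the already-established equivalence (\ref{eq:polyMUEP})$\Leftrightarrow$(\ref{eq:origMUEP}); these are the same computation in different coordinates.

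One small slip to clean up: your verification that $h_{J+1+m}$ is highpass is not quite right. Knowing $\sum_m q^{-1/2}\overline{H_{\nu_m}({\tt 1})}=1$ does not by itself make each coordinate cancel, and the ``row sums of ${\tt I}_q-{\tt G}({\tt 1})^\ast{\tt H}({\tt 1})$ are zero'' remark is what you are trying to prove, not a given. What you actually need is $G_{\nu_m}({\tt 1})=q^{-1/2}$ for every $m$ individually; this follows from the paper's standing assumption that $\hat g(\gam)=0$ for all $\gam\in\dualset\setminus\{0\}$, e.g.\ by evaluating Eq.~(\ref{eq:fnu}) at $\ome=0$. (The paper's own proof does not spell out this highpass check either, so this is a minor point.)
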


\begin{proof}
We first note that by Lemma~\ref{lem:core}, Eq.~(\ref{eq:core}) holds, i.e., we have 
\begin{eqnarray*}\Phi_{g,h, [k_1,\cdots,k_{J}]}(z)^\ast\Phi_{h,g, [l_1,\cdots,l_{J}]}(z)={\tt I}_q, \quad\forall z\in\Td.\end{eqnarray*}

Let \(\mathcal{K}:=[k_1,\cdots,k_{J}]\) and \(\mathcal{L}:=[l_1,\cdots,l_{J}]\). 
Then, by using the Fourier transform matrix \(X(\ome)\) in Eq.~(\ref{eq:FTmatrix}), we get 
\begin{equation*}
\left(\Phi_{g,h, \mathcal{K}}(e^{i\dil^T\ome})X(\ome)\right)^\ast\left(\Phi_{h,g, \mathcal{L}}(e^{i\dil^T\ome})X(\ome)\right)={\tt I}_q,\quad \forall \ome\in\Theta^\dm.
\end{equation*}
By recalling that \( {\tt H}(e^{i\dil^T\ome})X(\ome)=\overrightarrow{h}(\ome)=[\hat{h}(\ome),\hat{h}(\ome+\gam_1),\dots,\hat{h}(\ome+\gam_{q-1})]\) from Eq. (\ref{eq:HXeqhvec}), we see that \(\Phi_{h,g, \mathcal{L}}(e^{i\dil^T\ome})X(\ome)\) reduces to 
\[		\left[\begin{array}{c}
                {\tt H}(e^{i\dil^T\ome})\\
                \mathcal{L}(e^{i\dil^T\ome})^\ast{\tt H}(e^{i\dil^T\ome})\\
                {\tt I}_q-{\tt G}(e^{i\dil^T\ome})^\ast{\tt H}(e^{i\dil^T\ome})
               \end{array}\right]X(\ome)
               =
               \left[\begin{array}{c}
               \overrightarrow{h}(\ome)\\
                \mathcal{L}(e^{i\dil^T\ome})^\ast\overrightarrow{h}(\ome)\\
                X(\ome)-{\tt G}(e^{i\dil^T\ome})^\ast\overrightarrow{h}(\ome)
               \end{array}\right].\]
Computing \(\mathcal{L}(e^{i\dil^T\ome})^\ast\overrightarrow{h}(\ome)\) and \(X(\ome)-{\tt G}(e^{i\dil^T\ome})^\ast\overrightarrow{h}(\ome)\) give
\[            \mathcal{L}(e^{i\dil^T\ome})^\ast\overrightarrow{h}(\ome)
               =
               \left[\begin{array}{c}
               \overrightarrow{h}(\ome) \overline{l_1(e^{i\dil^T\ome})}\\
                \vdots\\
                \overrightarrow{h}(\ome)\overline{l_J(e^{i\dil^T\ome})}
               \end{array}\right]               
		=
               \left[\begin{array}{c}
                \overrightarrow{h_1}(\ome)\\
                \vdots\\
                \overrightarrow{h_J}(\ome)
               \end{array}\right]               
\]
and
\[
		X(\ome)-{\tt G}(e^{i\dil^T\ome})^\ast\overrightarrow{h}(\ome)
		=
               X(\ome)-
               \left[\begin{array}{c}
               \overrightarrow{h}(\ome) \overline{G_{\nu_0}(e^{i\dil^T\ome})}\\
                \vdots\\
                \overrightarrow{h}(\ome)\overline{G_{\nu_{q-1}}(e^{i\dil^T\ome})}\\
               \end{array}\right]               		
		=
               \left[\begin{array}{c}
                \overrightarrow{h_{J+1}}(\ome)\\
                \vdots\\
                \overrightarrow{h_{J+q}}(\ome)
               \end{array}\right],               
\]
where \(h_1,\cdots, h_{J+q}\) are the highpass filters defined in (\ref{eq:hj}) and (\ref{eq:hm}). Thus, \(\Phi_{h,g, \mathcal{L}}(e^{i\dil^T\ome})X(\ome)\) corresponds to the right matrix on the left-hand side of the identity in Eq.~(\ref{eq:matrixMUEP}) with \(s:=J+q\).

Similar observation can be made for the dual filters \(g, h_1^d,\cdots, h_{J+q}^d\) to show that \((\Phi_{g,h, \mathcal{K}}(e^{i\dil^T\ome})X(\ome))^\ast\) corresponds to the left matrix on the left-hand side of the identity in Eq.~(\ref{eq:matrixMUEP}) with \(s:=J+q\).
Therefore, \(\{h, h_1,\cdots, h_{J+q}\}\) and \(\{g, h_1^d,\cdots, h_{J+q}^d\}\) satisfy the matrix version of MUEP condition in Eq.~(\ref{eq:matrixMUEP}) and this completes the proof.
\end{proof}

Our theorem here is a generalization of Result \ref{result:SOStoUEP} discussed in Section \ref{subS:SOSandWF}, as we obtain Result \ref{result:SOStoUEP} as a special case by considering when \(h\) satisfies the SOS condition (\ref{eq:SOS}).

In our next theorem, we show that the MUEP condition implies the SVP condition. We use the following lemma to prove our theorem. The proof of this lemma can be obtained by slightly generalizing the arguments used in proving Lemma 1 in \cite{HLO}.

\begin{lemma}
\label{lem:svpinome}
Let \(h\) and \(g\) be lowpass filters. Suppose that there exist trigonometric polynomials \(\rho_1, \cdots, \rho_J\) and \(\mu_1,\cdots, \mu_J\) such that
\begin{equation}
\label{eq:svpinome}
1-\sum_{\gam\in{\dualset}}\hat{h}(\ome+\gam)\overline{\hat{g}(\ome+\gam)}=\sum_{j=1}^J \rho_j(\ome)\overline{\mu_j(\ome)},\quad \ome \in \Theta^\dm,
\end{equation}
with \(\sum_{\gam\in\dualset}e^{-i\gam\cdot\nu}\rho_j(\gam)=\sum_{\gam\in\dualset}e^{-i\gam\cdot\nu}\mu_j(\gam)=0\), for all  \(j=1,\cdots,J\) and \(\nu\in\capset\).
Then \(h\) and \(g\) satisfy the SVP condition (\ref{eq:SVP}) with \(J\times |\det\dil|\) pairs of generators.
\end{lemma}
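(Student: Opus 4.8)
The plan is to transfer the given identity (\ref{eq:svpinome}), stated in the frequency variable $\ome$, into a Laurent-polynomial identity in $z$, by combining the polyphase decomposition (\ref{eq:omepoly}) with the (unitary) Fourier transform matrix $X(\ome)$ of (\ref{eq:FTmatrix}) and then averaging over the coset $\dualset$ to cancel the cross terms. First I would rewrite the left-hand side of (\ref{eq:svpinome}): by (\ref{eq:HXeqhvec}) and the unitarity of $X(\ome)$, one has $\sum_{\gam\in\dualset}\hat h(\ome+\gam)\overline{\hat g(\ome+\gam)}=\overrightarrow h(\ome)\,\overrightarrow g(\ome)^\ast={\tt H}(e^{i\dil^T\ome}){\tt G}(e^{i\dil^T\ome})^\ast$, so the left side of (\ref{eq:svpinome}) equals $1-{\tt H}(w){\tt G}(w)^\ast$ with $w:=e^{i\dil^T\ome}$.

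Next I would apply the polyphase split (\ref{eq:omepoly}) to each trigonometric polynomial $\rho_j$ and $\mu_j$, writing $\rho_j(\ome)=\sum_{\nu\in\capset}e^{i\ome\cdot\nu}R_{j,\nu}(\dil^T\ome)$ and $\mu_j(\ome)=\sum_{\nu\in\capset}e^{i\ome\cdot\nu}M_{j,\nu}(\dil^T\ome)$ with $R_{j,\nu},M_{j,\nu}$ trigonometric polynomials. The core step is to evaluate (\ref{eq:svpinome}) at $\ome+\gam$ for each $\gam\in\dualset$ and average over $\gam$. Since $\dil^T\gam\in 2\pi\Zd$, every function of $\dil^T\ome$ is invariant under $\ome\mapsto\ome+\gam$ --- in particular the $R_{j,\nu}(\dil^T\ome)$, and, by the $2\pi\Zd$-periodicity of $\hat h$ and $\hat g$ together with the fact that $\ome\mapsto\ome+\gam$ merely permutes the coset sum $\sum_{\gam'\in\dualset}\hat h(\ome+\gam')\overline{\hat g(\ome+\gam')}$, the left side of (\ref{eq:svpinome}) as well; on the right side the products expand as
\[\frac1q\sum_{\gam\in\dualset}\rho_j(\ome+\gam)\overline{\mu_j(\ome+\gam)}=\sum_{\nu,\nu'\in\capset}\Bigl(\frac1q\sum_{\gam\in\dualset}e^{i\gam\cdot(\nu-\nu')}\Bigr)e^{i\ome\cdot(\nu-\nu')}R_{j,\nu}(\dil^T\ome)\overline{M_{j,\nu'}(\dil^T\ome)},\]
and the character orthogonality $\frac1q\sum_{\gam\in\dualset}e^{i\gam\cdot(\nu-\nu')}=\del_{\nu\nu'}$ (valid for $\nu,\nu'\in\capset$) collapses this to the diagonal $\sum_{\nu\in\capset}R_{j,\nu}(\dil^T\ome)\overline{M_{j,\nu}(\dil^T\ome)}$. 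Letting $\tilde R_{j,\nu}$ be the Laurent polynomial with $\tilde R_{j,\nu}(z^\dil)=R_{j,\nu}(\dil^T\ome)$ for $z=e^{i\ome}$ (and $\tilde M_{j,\nu}$ likewise), and using that $w=e^{i\dil^T\ome}$ ranges over all of $\Td$ as $\ome$ ranges over $\Theta^\dm$, the averaged identity reads
\[1-{\tt H}(z){\tt G}(z)^\ast=\sum_{j=1}^{J}\sum_{\nu\in\capset}\tilde R_{j,\nu}(z)\overline{\tilde M_{j,\nu}(z)},\qquad z\in\Td,\]
which is precisely the SVP identity (\ref{eq:SVP}) with $J\cdot|\det\dil|$ pairs of generators $k_{j,\nu}:=\tilde R_{j,\nu}$ and $l_{j,\nu}:=\tilde M_{j,\nu}$.

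It remains to verify the normalization $k_{j,\nu}({\tt 1})=l_{j,\nu}({\tt 1})=0$. Since $k_{j,\nu}({\tt 1})=\tilde R_{j,\nu}({\tt 1})=R_{j,\nu}(0)$, it suffices to show $R_{j,\nu}(0)=0$. Evaluating the polyphase split of $\rho_j$ at $\gam\in\dualset$ and using $\dil^T\gam\in 2\pi\Zd$ once more gives $\rho_j(\gam)=\sum_{\nu'\in\capset}e^{i\gam\cdot\nu'}R_{j,\nu'}(0)$; multiplying by $e^{-i\gam\cdot\nu}$, summing over $\gam\in\dualset$, and invoking the same character orthogonality yields $\sum_{\gam\in\dualset}e^{-i\gam\cdot\nu}\rho_j(\gam)=q\,R_{j,\nu}(0)$, which vanishes by hypothesis; the argument for $M_{j,\nu}(0)$ is identical.

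These computations are routine, and the argument parallels that of Lemma~1 in \cite{HLO}; the step requiring the most care is the averaging, where I expect the only (mild) obstacle to be the bookkeeping --- confirming that the left side of (\ref{eq:svpinome}) is genuinely invariant under the shifts $\ome\mapsto\ome+\gam$, and that the off-diagonal ($\nu\neq\nu'$) terms of the products $\rho_j\overline{\mu_j}$ actually cancel after averaging. Carrying two distinct families $\rho_j$ and $\mu_j$, rather than the single SOS family of \cite{HLO}, costs nothing essential: it simply means two polyphase splits are tracked in parallel.
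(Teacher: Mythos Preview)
Your proposal is correct and follows essentially the same approach as the paper's own proof: rewrite the left side via the polyphase/Fourier-matrix identity as $1-{\tt H}(e^{i\dil^T\ome}){\tt G}(e^{i\dil^T\ome})^\ast$, average the right side over $\gam\in\dualset$ and use the character orthogonality $\sum_{\gam\in\dualset}e^{i\gam\cdot(\nu-\nu')}=q\,\del_{\nu\nu'}$ to collapse the polyphase expansion of $\rho_j\overline{\mu_j}$ to its diagonal, and then read off the vanishing of the polyphase components at ${\tt 1}$ from the hypothesis via the same orthogonality. The only cosmetic difference is that the paper passes through filters $r_j,u_j$ with masks $\rho_j,\mu_j$ and records the key identity as Eq.~(\ref{eq:fnu}), whereas you work directly with the polyphase components $R_{j,\nu},M_{j,\nu}$.
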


\begin{proof} 
Let \(\rho(\ome)=1-\sum_{\gam\in{\dualset}}\hat{h}(\ome+\gam)\overline{\hat{g}(\ome+\gam)}\). Note that since the lowpass filters \(h\) and \(g\) are assumed to satisfy \(\hat{h}(0)=\hat{g}(0)=1\), and \(\hat{h}(\gam)=\hat{g}(\gam)=0\) for all \(\gam\in\dualset\bks\{0\}\), we have \(\rho(0)=0\).

Since \(\overrightarrow{h} (\ome)=[\hat{h}(\ome+\gam_0),\dots,\hat{h}(\ome+\gam_{q-1})]\) and \(\overrightarrow{g} (\ome)=[\hat{g}(\ome+\gam_0),\dots,\hat{g}(\ome+\gam_{q-1})]\) by definition, we get \(\rho(\ome)=1-\overrightarrow{h} (\ome)  \overrightarrow{g} (\ome)^\ast\).
Using Eq.~(\ref{eq:HXeqhvec}) for each of the filters \(h\) and \(g\), we see that
\[\rho(\ome)=1-{\tt H}(e^{i\dil^T\ome})X(\ome)\left({\tt G}(e^{i\dil^T\ome})X(\ome)\right)^\ast=1-{\tt H}(e^{i\dil^T\ome}){\tt G}(e^{i\dil^T\ome})^\ast,\]
where \({\tt H}(z)\) and \({\tt G}(z)\) are the polyphase representations of \(h\) and \(g\), respectively, and \(X(\ome)\) is the Fourier transform matrix in Eq.~(\ref{eq:FTmatrix}).

Furthermore, since \(\rho(\ome+\gam)=\rho(\ome)\) for all \(\gam\in\dualset\), the given identity (\ref{eq:svpinome}) imples 
\[\rho(\ome)=\frac{1}{|\det\dil|}\sum_{\gam\in{\dualset}}\rho(\ome+\gam)=\frac{1}{|\det\dil|}\sum_{j=1}^J \sum_{\gam\in{\dualset}}\rho_j(\ome+\gam)\overline{\mu_j(\ome+\gam)}.\]
Let \(r_j\) and \(u_j\) be the filters associated with masks \(\rho_j\) and \(\mu_j\), respectively, so that \(\rho_j(\ome)=\hat{r_j}(\ome)\) and \(\mu_j(\ome)=\hat{u_j}(\ome)\) (cf. Eq.~(\ref{eq:maskdef})).
Using Eq.~(\ref{eq:omepoly}) for each of the filters \(r_j\) and \(u_j\), we see that
\begin{eqnarray*}
&&\sum_{\gam\in{\dualset}}\hat{r_j}(\ome+\gam)\overline{\hat{u_j}(\ome+\gam)}\\&=&\sum_{\gam\in{\dualset}}\sum_{\nu\in\capset}e^{i(\ome+\gam)\cdot\nu}\hat{(r_j)_\nu}(\dil^T\ome)\overline{\sum_{\nu'\in\capset}e^{i(\ome+\gam)\cdot\nu'}\hat{(u_j)_{\nu'}}(\dil^T\ome)}\\
&=&\sum_{\nu,\nu'\in\capset} \left(\sum_{\gamma\in\dualset}e^{i\gam\cdot(\nu-\nu')}\right)e^{i\ome\cdot(\nu-\nu')}\hat{(r_j)_\nu}(\dil^T\ome)\overline{\hat{(u_j)_{\nu'}}(\dil^T\ome)}\\
&=&|\det\dil|\sum_{\nu\in\capset}\hat{(r_j)_\nu}(\dil^T\ome)\overline{\hat{(u_j)_{\nu}}(\dil^T\ome)},
\end{eqnarray*}
where \((r_j)_\nu\) and \((u_j)_\nu\) are the polyphase decomposition of \(r_j\) and \(u_j\), respectively, and for the last equality, the following simple identity~\cite{Hur1}
\begin{equation}
\label{eq:simple}
\sum_{\gamma\in\dualset}e^{i\gam\cdot(\nu-\nu')}=|\det\dil|\delta_{\nu\nu'}
\end{equation}
is used.

Hence, the given identity (\ref{eq:svpinome}) becomes
\[1-{\tt H}(e^{i\dil^T\ome}){\tt G}(e^{i\dil^T\ome})^\ast=\sum_{j=1}^J\sum_{\nu\in\capset}\hat{(r_j)_\nu}(\dil^T\ome)\overline{\hat{(u_j)_{\nu}}(\dil^T\ome)}.\]
By setting \(z:=e^{i\dil^T\ome}\), we see that 
\[1-{\tt H}(z){\tt G}(z)^\ast=\sum_{j=1}^J\sum_{\nu\in\capset}k_{j,\nu}(z)\overline{l_{j,\nu}(z)},\]
where \(k_{j,\nu}(e^{i\dil^T\ome}):=\hat{(r_j)_\nu}(\dil^T\ome)\) and \(l_{j,\nu}(e^{i\dil^T\ome}):=\hat{(u_j)_\nu}(\dil^T\ome)\).

We now check the vanishing condition of the generators \(k_{j,\nu}\) and \(l_{j,\nu}\), for all \(j=1,\cdots,J\) and \(\nu\in\capset\).

Let us fix \(\nu\in\capset\). Note that for a filter \(f\), by invoking Eq.~(\ref{eq:omepoly}) again, we have  
\begin{eqnarray}
\sum_{\gam\in\dualset}(e^{i(\ome+\gam)})^{-\nu}\hat{f}(\ome+\gam)&=&\sum_{\gam\in\dualset}(e^{i(\ome+\gam)})^{-\nu}\sum_{\nu'\in\capset}e^{i(\ome+\gam)\cdot{\nu'}}\hat{f_{\nu'}}(\dil^T\ome)\nonumber\\
&=&\sum_{\nu'\in\capset}\left(\sum_{\gam\in\dualset}e^{i\gam\cdot(\nu'-\nu)}\right)e^{i\ome\cdot(\nu'-\nu)}\hat{f_{\nu'}}(\dil^T\ome)\nonumber\\
&=&|\det\dil|\hat{f_{\nu}}(\dil^T\ome),
\label{eq:fnu}
\end{eqnarray}
where, for the last equality, the identity in Eq.~(\ref{eq:simple}) is used. 

Let \(1\le j\le J\) be fixed. Using the identity in Eq.~(\ref{eq:fnu}) for each of the filters \(r_j\) and \(u_j\), since \(\hat{r_j}=\rho_j\) and \(\hat{u_j}=\mu_j\), we see that the given assumptions on \(\rho_j\) and \(\mu_j\) imply

\begin{eqnarray*}\hat{(r_j)_\nu}(0)=\frac{1}{|\det\dil|}\sum_{\gam\in\dualset}e^{-i\gam\cdot\nu}\rho_j(\gam)=0,\end{eqnarray*}
\begin{eqnarray*}\quad \hat{(u_j)_\nu}(0)=\frac{1}{|\det\dil|}\sum_{\gam\in\dualset}e^{-i\gam\cdot\nu}\mu_j(\gam)=0.\end{eqnarray*}
Thus, we have \(k_{j,\nu}({\tt 1})=\hat{(r_j)_\nu}(0)=0\) and \(l_{j,\nu}({\tt 1})=\hat{(u_j)_\nu}(0)=0\).

Therefore, we conclude that \(k_{j,\nu}({\tt 1})=l_{j,\nu}({\tt 1})=0\) for all  \(j=1,\cdots,J\) and \(\nu\in\capset\), and this finishes the proof.
\end{proof}

We are ready to present our result for getting the SVP condition from the MUEP condition. Our proof of this theorem mostly follows the approach presented in ~\cite{AlgPersI}, which is used to show that the UEP condition implies the SOS condition.

\begin{theorem}
\label{thm:muepthensvp}
Let \(h\) and \(g\) be lowpass filters, and let \(\{h, h_1,\cdots, h_{s}\}\) and \(\{g, h_1^d,\cdots, h_{s}^d\}\) be the primal and dual wavelet filter banks, respectively.
Then, \(h\) and \(g\) satisfy the SVP condition.
\end{theorem}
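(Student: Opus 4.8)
The plan is to derive the hypothesis of Lemma~\ref{lem:svpinome} directly from the MUEP condition, and then invoke that lemma. Starting from the primal-dual wavelet filter banks $\{h,h_1,\dots,h_s\}$ and $\{g,h_1^d,\dots,h_s^d\}$, the MUEP condition~(\ref{eq:origMUEP}) with $\gam=0$ reads
\[
\hat h(\ome)\overline{\hat g(\ome)}+\sum_{i=1}^s\hat h_i(\ome)\overline{\hat h_i^d(\ome)}=1,\quad\forall\ome\in\Theta^\dm.
\]
I would first rewrite the left-hand side using the fact that each $\hat h_i$ is a highpass mask factoring through the lowpass masks, but more directly I would sum the MUEP identity over all $\gam\in\dualset$. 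Since $\sum_{\gam\in\dualset}\hat h(\ome+\gam)\overline{\hat g(\ome+\gam)}$ appears on subtracting, and since $\sum_{\gam\in\dualset}\delta_{0\gam}=1$, summing~(\ref{eq:origMUEP}) over the appropriate shifted variable gives an identity of the form
\[
1-\sum_{\gam\in\dualset}\hat h(\ome+\gam)\overline{\hat g(\ome+\gam)}=\sum_{i=1}^s\hat h_i(\ome)\overline{\hat h_i^d(\ome)}+\big(\text{cross terms}\big).
\]
The precise bookkeeping here is the first technical step: I need to check that the cross terms either vanish or reassemble, so that the right-hand side becomes a genuine sum of products $\sum_j\rho_j(\ome)\overline{\mu_j(\ome)}$ with finitely many trigonometric-polynomial factors. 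The natural choice is $\rho_i=\hat h_i$ and $\mu_i=\hat h_i^d$ (possibly after a diagonal/Fourier-type change of index as in~(\ref{eq:HXeqhvec}), expressing the shifted sum via the polyphase vectors and the unitary matrix $X(\ome)$, exactly as done inside the proof of Lemma~\ref{lem:svpinome}).

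Next I would verify the vanishing condition $\sum_{\gam\in\dualset}e^{-i\gam\cdot\nu}\rho_j(\gam)=\sum_{\gam\in\dualset}e^{-i\gam\cdot\nu}\mu_j(\gam)=0$ for all $\nu\in\capset$. This should follow because $h_i$ and $h_i^d$ are highpass filters: by the computation in~(\ref{eq:fnu}), the quantity $\sum_{\gam\in\dualset}e^{-i\gam\cdot\nu}\hat{h_i}(\gam)$ equals $|\det\dil|\,\widehat{(h_i)_\nu}(0)$, and the highpass property together with the standing assumption $\hat{h_i}(\gam)=0$ for the relevant shifts forces these polyphase values at $0$ to vanish. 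I would need to confirm that the highpass condition on $h_i$ (namely $\sum_m h_i(m)=0$, i.e. $\hat h_i(0)=0$) plus the assumption analogous to the one imposed on lowpass filters — that $\hat{h_i}(\gam)=0$ for $\gam\in\dualset\bks\{0\}$ — is either available or derivable from the MUEP setup; if it is not automatic, the cleaner route is to observe that $\rho(\ome):=1-\sum_{\gam}\hat h(\ome+\gam)\overline{\hat g(\ome+\gam)}$ satisfies $\rho(\ome+\gam)=\rho(\ome)$ and $\rho(0)=0$ (using $\hat h(0)=\hat g(0)=1$ and the vanishing at nonzero $\gam$), and then the vanishing of the generator polyphase parts is exactly the content already extracted in Lemma~\ref{lem:svpinome}'s proof.

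Once the representation~(\ref{eq:svpinome}) with the stated vanishing conditions is in hand, Lemma~\ref{lem:svpinome} applies verbatim and produces the SVP condition~(\ref{eq:SVP}) with $J\times|\det\dil|$ pairs of generators $k_{j,\nu},l_{j,\nu}$; this completes the proof. I expect the main obstacle to be the middle step: cleanly rewriting the MUEP sum $\sum_{i}\hat h_i(\ome)\overline{\hat h_i^d(\ome)}$ — which is a sum over the base frequency $\ome$ only — into the shifted-sum form $\sum_{\gam\in\dualset}(\cdots)(\ome+\gam)$ required by Lemma~\ref{lem:svpinome}, and simultaneously ensuring the $\rho_j,\mu_j$ one chooses are trigonometric polynomials satisfying the discrete vanishing moments. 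The resolution should mirror the argument in~\cite{AlgPersI} for UEP~$\Rightarrow$~SOS: pass to polyphase representations, use $X(\ome)^\ast X(\ome)={\tt I}_q$ and~(\ref{eq:HXeqhvec}) to convert between $\overrightarrow{h_i}(\ome)$ and ${\tt H}_i(e^{i\dil^T\ome})$, read off $1-{\tt H}(z){\tt G}(z)^\ast=\sum_i{\tt H}_i(z)^\ast\cdot(\text{dual})$ from~(\ref{eq:polyMUEP}), and then average over $\dualset$ to land precisely in the hypothesis of Lemma~\ref{lem:svpinome}.
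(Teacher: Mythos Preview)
Your direct approach has a genuine gap, for two concrete reasons.

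First, you cannot read off the scalar identity $1 - {\tt H}(z){\tt G}(z)^\ast = \sum_i (\cdots)$ from~(\ref{eq:polyMUEP}). That equation is a $q\times q$ matrix identity, ${\tt I}_q - {\tt G}(z)^\ast{\tt H}(z) = \sum_{i=1}^s {\tt H}_i^d(z)^\ast{\tt H}_i(z)$. Taking the trace gives $q - {\tt H}(z){\tt G}(z)^\ast = \sum_i {\tt H}_i(z){\tt H}_i^d(z)^\ast$, with the wrong constant~$q$; equivalently, summing the $\gam=0$ case of~(\ref{eq:origMUEP}) over the shifts $\ome\mapsto\ome+\gam'$ yields $q - \sum_{\gam}\hat h(\ome+\gam)\overline{\hat g(\ome+\gam)} = \sum_i\sum_{\gam}\hat h_i(\ome+\gam)\overline{\hat h_i^d(\ome+\gam)}$, again with $q$ on the left. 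No linear bookkeeping turns this into the required constant~$1$: the passage from the rank-one perturbation ${\tt I}_q - {\tt G}^\ast{\tt H}$ to the scalar $1 - {\tt H}{\tt G}^\ast$ is a \emph{determinant} (Sylvester's identity), and expanding $\det\bigl(\sum_i ({\tt H}_i^d)^\ast{\tt H}_i\bigr)$ as a sum of products requires Cauchy--Binet. That is precisely the argument from~\cite{AlgPersI} you gesture at in your last sentence, and it is what the paper actually does: the generators $\rho_\sigma,\mu_\sigma$ are the $q\times q$ minors $\det U(\ome)_{[\sigma]}$, $\det V(\ome)_{[\sigma]}$ built from the highpass rows, not the highpass masks $\hat h_i,\hat h_i^d$ themselves.

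Second, even if the identity held, the choice $\rho_i=\hat h_i$, $\mu_i=\hat h_i^d$ would fail the vanishing hypothesis of Lemma~\ref{lem:svpinome}. Since the matrix $[e^{-i\gam\cdot\nu}]_{\nu\in\capset,\gam\in\dualset}$ is invertible, the condition $\sum_{\gam\in\dualset}e^{-i\gam\cdot\nu}\hat h_i(\gam)=0$ for all $\nu\in\capset$ forces $\hat h_i(\gam)=0$ for \emph{every} $\gam\in\dualset$, whereas the highpass property only guarantees $\hat h_i(0)=0$; no additional vanishing at nonzero $\gam$ is assumed for highpass filters or derivable from MUEP. The determinant route resolves this neatly: for any $\gam\in\dualset$ there is $\gam'\in\dualset$ with $\gam+\gam'\equiv 0$, and the corresponding column of $U(\gam)_{[\sigma]}$ has entries $\hat h_{\sigma(j)}(\gam+\gam')=\hat h_{\sigma(j)}(0)=0$, so $\det U(\gam)_{[\sigma]}=0$ using only the single highpass condition at the origin.
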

\begin{proof} 
Note that since \(\{h, h_1,\cdots, h_{s}\}\) and \(\{g, h_1^d,\cdots, h_{s}^d\}\) are the primal and dual wavelet filter banks, they satisfy the MUEP condition (\ref{eq:origMUEP}), or equivalently (\ref{eq:matrixMUEP}).

Then, the identity in Eq.~(\ref{eq:matrixMUEP}) can be written as 
\[{\tt I}_q -\overrightarrow{g}(\ome)^\ast \overrightarrow{h}(\ome)=\left[\begin{array}{c}
                \overrightarrow{h_1^d}(\ome)\\
                \cdots\\
                \overrightarrow{h_s^d}(\ome)\\
               \end{array}\right]^\ast
               \left[\begin{array}{c}
                \overrightarrow{h_1}(\ome)\\
                \cdots\\
                \overrightarrow{h_s}(\ome)\\
               \end{array}\right]=:V (\ome)^\ast U (\ome).\]
Hence, by taking the determinant of matrices on both sides, we see that 
\[1-\overrightarrow{h} (\ome)  \overrightarrow{g} (\ome)^\ast=1-\sum_{\gam\in{\dualset}}\hat{h}(\ome+\gam)\overline{\hat{g}(\ome+\gam)}=\det(V(\ome)^\ast U (\ome)).\] 

By Lemma~\ref{lem:svpinome}, it suffices to show there exist trigonometric polynomials \(\rho_j\) and \(\mu_j\), \(j=1,\cdots,J\), such that 
\begin{equation}
\label{eq:suffices}
\det(V(\ome)^\ast U (\ome))=\sum_{j=1}^J \rho_j(\ome)\overline{\mu_j(\ome)},
\end{equation}
where
\begin{equation}
\label{eq:extra}
\sum_{\gam\in\dualset}e^{-i\gam\cdot\nu}\rho_j(\gam)=\sum_{\gam\in\dualset}e^{-i\gam\cdot\nu}\mu_j(\gam)=0,\quad \forall \nu\in\capset.
\end{equation}
Note that both \(U(\ome)\) and \(V(\ome)\) are of the size \(s\times q\). We consider three different cases depending on the relative size of \(s\) and \(q\).

{\noindent Case I  (When \( s<q\)): } In this case, \(\det(V(\ome)^\ast U (\ome))=0\), hence the conditions (\ref{eq:suffices}) and (\ref{eq:extra}) are trivially satisfied by taking the zero trigonometric polynomial as generators.

{\noindent Case II  (When \( s=q\)): } Since \(\det(V(\ome)^\ast U (\ome))=\det(U (\ome))\overline{\det(V(\ome))}\), the condition (\ref{eq:suffices}) is satisfied with \(J=1\) in this case.

Let \(\gam\in\dualset\) be arbitrary but fixed. Then, there exists \(\gam'\in\dualset\) such that \(\gam+\gam'\equiv 0\) \(({\rm mod}\, 2\pi\Zd)\).
Since \(h_1,\cdots,h_s\) are highpass filters, the \(\gam'\)-column of \(U(\ome)\) at \(\ome=\gam\), consistsing of the entries \(h_1(\gam+\gam'),\cdots,h_s(\gam+\gam')\), is euqal to the zero vector.
Therefore, we see that \(\det(U(\gam))=0\) for all \(\gam\in\dualset\).

By using a similar argument for the matrix \(V(\ome)\) considering the highpass filters \(h_1^d,\cdots,h_s^d\) this time, we have \(\det(V(\gam))=0\) for all \(\gam\in\dualset\). This completes the proof since the condition (\ref{eq:extra}) is satisfied for the generating trigonometric polynomials \(\det(U(\ome))\) and  \(\det(V(\ome))\).

{\noindent Case III  (When \(s>q\)): } Note that in this case, by the Cauchy-Binet formula~\cite{MA}, 
\begin{equation*} 
\det(V(\ome)^\ast U(\ome))=\sum_{\sigma\subset \{1,\cdots,s\},|\sigma|=q} \det(U(\ome)_{[\sigma]})\overline{\det(V(\ome)_{[\sigma]})}, 
\end{equation*} 
where \(\sigma\) denotes the permutation of a subset of  \(\{1,\cdots,s\}\) with size \(q\), and the \(q\times q\) square matrices \(U(\ome)_{[\sigma]}\) and \(V(\ome)_{[\sigma]}\) are defined as
\begin{equation*}
U(\ome)_{[\sigma]}:=\left[\begin{array}{c}
                \overrightarrow{h_{\sig(1)}}(\ome)\\
                \vdots\\
                \overrightarrow{h_{\sig(q)}}(\ome)
               \end{array}\right],\quad 
V(\ome)_{[\sigma]}:=\left[\begin{array}{c}
                \overrightarrow{h_{\sig(1)}^d}(\ome)\\
                \vdots\\
                \overrightarrow{h_{\sig(q)}^d}(\ome)
               \end{array}\right].            
\end{equation*}
Thus, the condition (\ref{eq:suffices}) is satisfied with \(J=s!/((s-q)!q!)\).

For the condition (\ref{eq:extra}), we follow a similar process as in Case II above with \(s=q\), but now applied to the matrices \(U(\ome)_{[\sigma]}\) and \(V(\ome)_{[\sigma]}\).
In particular, from the fact that \(h_{\sig(1)},\cdots,h_{\sig(q)}\) and \(h_{\sig(1)}^d,\cdots,h_{\sig(1)}^d\) are highpass filters, we have \(\det(U(\gam)_{[\sigma]})=\det(V(\gam)_{[\sigma]})=0\), for all \(\gam\in\dualset\). This implies the condition (\ref{eq:extra}) for the trigonometric polynomials \(\det(U(\ome)_{[\sigma]})\) and \(\det(V(\ome)_{[\sigma]})\), for all the permutataion \(\sigma\) of  \(\{1,\cdots,s\}\) with size \(q\), which finishes the proof.
\end{proof}

\section{Examples}
\label{S:examples}
In this section, we present some examples to illustrate the application of our proposed wavelet filter bank design method. These examples demonstrate the versatility and effectiveness of our approach in different contexts and dimensions. For simplicity, in all of our examples below, we consider the case  \(g=h\), i.e., the two lowpass filters are the same.

\begin{example} 
\label{ex:first}
Consider the two-dimensional dyadic case, i.e., \(\dil=2{\tt I}_2\).
We choose \(\capset=\{(0,0),(1,0), (0,1),(1,1)\}\) and \(\dualset=\{(0, 0),(\pi,0), (0,\pi),(\pi,\pi)\}\).
Let \(R\) be a one-dimensional trigonometric polynomial introduced in \cite{BA} of the form
\[R(\ome)=a+\cos(\ome)+(1-a)\cos(2\ome),\quad \ome \in \Theta,\]
for \(a\in \RR\). Let \(h\) be the two-dimensional lowpass filter associated with the mask
\[\hat{h}(\ome)=\frac{1}{4}(-2+R(\ome_1)+R(\ome_2)+R(\ome_1+\ome_2)), \quad\ome=(\ome_1,\ome_2)\in\Theta^2.\]
The filter \(h\), obtained by applying the coset sum method \cite{YFC} to the 1-D mask 
\(R\), is studied in Example 4 of \cite{HurLubb1}.
Letting \(g=h\), we show below that the SVP condition for the filter \(h\) is satisfied in this case.

Since the \(z\)-transform of the filter \(h\) is given as 
\[H(z)=\frac{3a}{2}-1+\frac{(1-a)}{4}\sum_{m\in \Gam'} \left({z^{2m} + z^{-2m}}\right)+\sum_{m\in \Gam'} z^m\left({\frac{1 + z^{-2m}}{4}}\right),\]
where \(\Gam':=\capset\bks\{0\}\), we can read off its polyphase representation as follows (cf.~(\ref{eq:zpoly})):
\begin{equation*}
{\tt{H}}(z)=\left[\frac{3a}{2}-1+\displaystyle{\frac{(1-a)}{4}\sum_{m\in \Gam'} \left({z^{m} + z^{-m}}\right) },{\,}\frac{1+z_1^{-1}}{4},{\,}\frac{1+z_2^{-1}}{4},{\,}\frac{1+z_1^{-1}z_2^{-1}}{4} \right]. 
\end{equation*} 
Then, one can write \(1-{\tt{H}}(z){\tt{H}}(z)^{\ast}\) as
\begin{eqnarray*}
&{\,}&\frac{1}{4}-\left(\frac{3a}{2}-1+\frac{(1-a)}{4} \sum_{m\in \Gam'}\left({z^{m} + z^{-m}}\right)\right)^2 \\&+&\sum_{m\in \Gam'}\left(\frac{1}{4}-\frac{1}{16}(z^m+z^{-m}+2)\right)\quad\\
&=&\left(\frac{3}{2}-\frac{3a}{2}-\frac{(1-a)}{4} \sum_{m\in \Gam'}\left({z^{m} + z^{-m}}\right)\right)\\&\cdot&\left(\frac{3a}{2}-\frac{1}{2}+\frac{(1-a)}{4} \sum_{m\in \Gam'}\left({z^{m} + z^{-m}}\right)\right)\\&+&\frac{1}{16}\left(\sum_{m\in \Gam'}(1-z^m)(1-z^{-m})\right)\\
&=&\frac{1-a}{4}\left(\sum_{m\in \Gam'}(1-z^m)(1-z^{-m})\right)E(z)\\&+&\frac{1}{16}\left(\sum_{m\in \Gam'}(1-z^m)(1-z^{-m})\right),
\end{eqnarray*}
where 
\begin{equation*}
E(z)=\frac{3a}{2}-\frac{1}{2}+\frac{(1-a)}{4} \sum_{m\in \Gam'}\left({z^{m} + z^{-m}}\right).
\end{equation*}

The SVP condition (\ref{eq:SVP}) is satisfied with \(g:=h\) in this case, since we have the following:
\begin{equation*}
\displaystyle 1-{\tt H}(z){\tt H}(z)^\ast=\sum_{j=1}^{6} k_j(z)\overline{l_j(z)}=:\mathcal{K}(z)\mathcal{L}(z)^\ast,    
\end{equation*}
where \(\mathcal{K}(z)\) and \(\mathcal{L}(z)\) are given as 
\begin{eqnarray*}
\mathcal{K}(z)&=&\left[(1-a)(1-z_1)E(z),(1-a)(1-z_2)E(z)),(1-a)(1-z_1z_2)E(z),\right.\\&{\,}&\left.\frac{(1-z_1)}{4},\frac{(1-z_2)}{4},\frac{(1-z_1z_2)}{4}\right],
\\\mathcal{L}(z)&=&\frac{1}{4}\biggr[(1-z_1), (1-z_2),(1-z_1 z_2),(1-z_1),(1-z_2),(1-z_1 z_2)\biggr]. 
\end{eqnarray*}
It is easy to see that \(\mathcal{K}({\tt 1})=\mathcal{L}({\tt 1})=0\).

Furthermore, we define highpass filters \(h_1,\cdots,h_6\) and \(h_1^d,\cdots,h_6^d\) as
\[\hat{h_j}(\ome):=\hat{h}(\ome)\overline{{l_j}(e^{2i\ome})},\quad \hat{h_j^d}(\ome):=\hat{h}(\ome)\overline{{k_j}(e^{2i\ome})},\quad j=1,2,3,4,5, 6,\]
and define highpass filters \(h_7,h_8,h_9,h_{10}\) as, with \(\nu_0=(0,0)\), \(\nu_1=(1,0)\), \(\nu_2=(0,1)\), and \(\nu_3=(1,1)\),
\[\hat{h_{6+1+m}}(\ome):=\frac{1}{2}e^{i\ome\cdot{\nu_m}} -\hat{h}(\ome)\overline{H_{\nu_m}(e^{2i\ome})},\quad m=0,1,2,3.\]
Then, \(\{h, h_1,\cdots, h_6, h_7,h_8,h_9,h_{10}\}\) and \(\{h, h_1^d,\cdots, h_6^d, h_7,h_8,h_9,h_{10}\}\) are the primal and dual wavelet filter banks, by Theorem~\ref{thm:main}.
\end{example}

\begin{example}
\label{ex:second}
Consider the two-dimensional quincunx case with the dilation matrix 
\[\dil:=\left[\begin{array}{ccc}
                1&&1\\
                1&&-1\\
                \end{array}\right].\] 
Then, \(|\det\dil|=2\). We choose \(\capset=\{(0,0),(1,0)\}\) and \(\dualset = \{(0,0),(\pi,\pi)\}\). 
Let \(h\) be a filter supported on \([-2,2]\times[-1,1]\) given by (cf. \cite{HJ})
 \[h                =\frac{\sqrt{2}}{32}\left[\begin{array}{ccccccccc}
                -1&&0&&2&&0&&-1\\
                 0&&8&&16&&8&&0\\
                -1&&0&&2&&0&&-1\end{array}\right].\] 
We next show that the SVP condition holds true in this case, with \(g\) chosen to be the same as \(h\).

Note that the \(z\)-transform \(H(z)\) of \(h\) is given as, for \(z=(z_1,z_2)\in \TT^2\), 
\begin{eqnarray*}
H(z)&=&\frac{\sqrt{2}}{2}+\frac{\sqrt{2}}{16}(4z_1+4z_1^{-1}+z_2+z_2^{-1})\\&-&\frac{\sqrt{2}}{32}(z_1^2z_2^{-1}+z_1^2z_2+z_1^{-2}z_2^{-1}+z_1^{-2}z_2).
\end{eqnarray*} 
Recall that \(h_{\nu}(m)=h(\dil m-\nu),\) \(\forall m \in \mathbb{Z}^2, \nu \in \capset\). Here, we have \( h_{(0,0)}\left(m\right)=h\left(m_1+m_2,m_1 -m_2\right)\) and \( h_{(1,0)}\left(m\right)=h\left(m_1+m_2-1,m_1 -m_2\right)\).

Thus, \(H_{(0,0)}(z)\) and \(H_{(1,0)}(z)\) can be computed as follows:
\begin{eqnarray*}
H_{(0,0)}(z)&=&\sum_{m \in \mathbb{Z}^2} h(m_1+m_2,m_1-m_2) z_1^{-m_1} z_2^{-m_2}=h(0,0)=\frac{\sqrt{2}}{2},
\\H_{(1,0)}(z)&=&\sum_{m \in \mathbb{Z}^2} h(m_1+m_2-1,m_1-m_2) z_1^{-m_1} z_2^{-m_2}\\
&=&\sum_{(\alp,\bet)\in \mathbb{Z}^2} h\left(\alp,\bet\right) z_1^{-(\alp+\bet+1)/2} z_2^{-(\alp-\bet+1)/2}\\
&=&\frac{\sqrt{2}}{32}\left(\right.- z_2-z_1+8+2z_1^{-1}+2z_2^{-1}\\&+&8 z_1^{-1} z_2^{-1}- z_1^{-2} z_2^{-1}-z_1^{-1} z_2^{-2}\left.\right).\end{eqnarray*} 
This gives the polyphase representation
\[
{\tt{H}}(z)=\frac{\sqrt{2}}{32}\biggr[16, \biggr(-z_2-z_1+8+2 z_1^{-1}+2 z_2^{-1}+8 z_1^{-1} z_2^{-1}- z_1^{-2} z_2^{-1}-z_1^{-1} z_2^{-2}\biggr)\biggr].
\]
Recalling our assumption that \(g=h\), we proceed with the computation
\begin{eqnarray*}
&&1-{\tt{H}}(z){\tt{H}}^\ast(z)\\&=&\frac{1}{2}-\frac{2}{32^2} \cdot \biggr(140+16 (z_1+z_2+z_1^{-1}+z_2^{-1})\\&-&4(z_1^2+z_2^2+z_1^{-2}+z_2^{-2})\\
&-&16(z_1^2 z_2+z_1 z_2^2+z_1^{-2} z_2^{-1}+z_1^{-1} z_2^{-2})+56(z_1z_2+z_1^{-1} z_2^{-1})\\
&+&2(z_1^2 z_2^2+z_1^{-2} z_2^{-2})\\&+&6 (z_1z^{-1}+z_1^{-1} z_2)+(z_1^3z_2+z_1 z_2^3+z_1^{-3} z_2^{-1}+z_1^{-1} z_2^{-3})\biggr),
\end{eqnarray*}
which gives that \(32^2(1-{\tt{H}}(z){\tt{H}}^\ast(z))\) is equal to
\begin{eqnarray*} 
&{-}&32\left(1-z_1^2 z_2\right)\left(1-z_1^{-2} z_2^{-1}\right) -32\left(1-z_1 z_2^2\right)\left(1-z_1^{-1} z_2^{-2}\right)\\&-&{8}\left(1-z_1^2\right)\left(1-z_1^{-2}\right)
-{8}\left(1-z_2^2\right)\left(1-z_2^{-2}\right)\\&+&{2}\left(1-z_1^3 z_2\right)\left(1-z_1^{-3} z_2^{-1}\right)+{4}\left(1-z_1^2 z_2^2\right)\left(1-z_1^{-2} z_2^{-2}\right)\\
&+&{2}\left(1-z_1 z_2^3\right)\left(1-z_1^{-1} z_2^{-3}\right)+{112}\left(1-z_1 z_2\right)\left(1-z_1^{-1} z_2^{-1}\right)\\
&+&32\left(1-z_1\right)\left(1-z_1^{-1}\right)+32\left(1-z_2\right) \left(1-z_2^{-1}\right)\\&+&{12}\left(1-z_1 z_2^{-1}\right)\left(1-z_1^{-1}z_2\right).
\end{eqnarray*}
Hence, we see that \(1-{\tt{H}}(z){\tt{H}}^\ast(z)=-\sum_{j=1}^{4} k_j(z)\overline{k_j(z)}+\sum_{j=5}^{11} k_j(z)\overline{k_j(z)},\)
where \([k_1(z), k_2(z), k_3(z), k_4(z)]\) is chosen as
\[ \frac{\sqrt{2}}{16}\biggr[2\left(1-z_1^2 z_2\right), 2\left(1-z_1 z_2^2\right),\left(1-z_1^2\right),\left(1-z_2^2\right)\biggr]\]
and \([k_5(z), k_6(z), k_7(z), k_8(z), k_9(z), k_{10}(z), k_{11}(z)]\) is chosen as
\begin{eqnarray*}
&{\,}&\biggr[\frac{\sqrt{2}}{32}\left(1-z_1^{3} z_2\right),\frac{2}{32}\left(1-z_1^{2} z_2^{2}\right),\frac{\sqrt{2}}{32}\left(1-z_1 z_2^{3}\right),\\
& &\frac{\sqrt{7}}{8}\left(1-z_1 z_2\right),\frac{\sqrt{2}}{8}\left(1-z_1\right),\frac{\sqrt{2}}{8}\left(1-z_2\right),\frac{\sqrt{3}}{16}\left(1-z_1 z_2^{-1}\right)\biggr].
\end{eqnarray*}
It is easy to see that they satisfy the vanishing condition \(k_j({\tt 1})=0\) for \(1\le j\le 11\). In other words, the SVP condition (\ref{eq:SVP}) is satisfied in this case.

Therefore, by Theorem~\ref{thm:main}, if we define highpass filters \(h_j\) for \(1\le j\le 11\) as
\[\hat{h_j}(\ome):=\frac{1}{\sqrt{2}}\overline{{k_j}(e^{i\dil^T\ome})}H (e^{i\ome}),\quad j=1,\cdots,11,\]
and define highpass filters \(h_{12}\) and \(h_{13}\) as, for \(m=0,1\) and \(\nu_0=(0,0)\), \(\nu_1=(1,0)\),
\[\hat{h_{11+1+m}}(\ome):=\frac{1}{\sqrt{2}}e^{i\ome\cdot{\nu_m}} -\frac{1}{\sqrt{2}}H (e^{i\ome})\overline{H_{\nu_m}(e^{i\dil^T\ome})},\]
then \[\{h, h_1,h_2,h_3, h_4, h_5,\cdots,h_{13}\}, \quad \{h, -h_1,-h_2,-h_3, -h_4, h_5,\cdots,h_{13}\}\] are the primal and dual wavelet filter banks. Note that in this case, we get the so-called quasi-tight wavelet filter bank, which is studied as a useful variant of tight wavelet filter banks, allowing \(-1\) multiplied to the highpass filters for dual wavelet filter bank (see, for example, \cite{diao}).
\end{example}

\begin{example}
\label{ex:third}
Consider the 1-D dyadic case, i.e. \(\dil=2\), and choose \(\capset=\{0,1\}\), \(\dualset=\{0,\pi\}\).
Let \(h\) be the lowpass filter associated with the centered Deslauriers-Dubuc interpolatory refinable function of order \(4\), supported on \([-3,3]\) (cf. the references \cite{SLP,ID}). Then the \(z\)-transform of \(h\) is
\[H(z)=\sqrt{2} \left(\frac{z+2+z^{-1}}{4}\right)^2 \left(\frac{-z+4-z^{-1}}{2}\right).\] 
Let us now show that the SVP condition (\ref{eq:SVP}) with \(g:=h\) is satisfied for this example as well.

Note that we have, for \(z\in\TT\),
\begin{equation*}
H(z)=\sum_{\nu\in\Gam}z^{\nu}H_\nu (z^2)=\frac{\sqrt{2}}{32}\left(-\frac{1}{z^3}+\frac{9}{z}+16+9z-z^3\right).
\end{equation*}
From this \(H(z)\), we get the following polyphase representaiton
\[{\tt H}(z)=\frac{\sqrt{2}}{32}\left[16,-\frac{1}{z^2}+\frac{9}{z}+9-z\right].\] 
Then, we have 
\[1-{\tt{H}}(z){\tt{H}}(z)^\ast=\frac{63}{512}(1-z)(1-z^{-1})-\frac{9}{256}(1-z^2)(1-z^{-2})+\frac{1}{512}(1-z^3)(1-z^{-3}).\]
Hence, 
\(
1-{\tt{H}}(z){\tt{H}}^\ast(z)=- k_1(z)\overline{k_1(z)}+\sum_{j=2}^{3} k_j(z)\overline{k_j(z)},
\)
where \(k_1(z), k_2(z), k_3(z)\) are
\[k_1(z)=\frac{3}{16}(1-z^2), \quad k_2(z)=\frac{3\sqrt{14}}{32}(1-z),\quad k_3(z)=\frac{\sqrt{2}}{32}(1-z^3),\]
and \(k_1({\tt 1})=k_2({\tt 1})=k_3({\tt 1})=0\), which implies the SVP condition is satisfied.

Therefore, by Theorem~\ref{thm:main}, if we define highpass filters \(h_1, h_2, h_3\) as
\[\hat{h_j}(\ome)=\frac{1}{\sqrt{2}}\overline{{k_j}(e^{2i\ome})}H (e^{i\ome}),\quad j=1,2,3,\]
and highpass filters \(h_4,h_5\) as
\[\hat{h_{3+1+m}}(\ome):=\frac{1}{\sqrt{2}}e^{i\ome\cdot{m}} -\frac{1}{\sqrt{2}}H (e^{i\ome})\overline{H_m(e^{2i\ome})}\,\quad m=0,1,\]
then \(\{h, h_1,h_2,h_3, h_4, h_5\}\) and \(\{h, -h_1,h_2,h_3, h_4, h_5\}\) are the primal and dual wavelet filter banks. In this case, as in Example~\ref{ex:second}, we get a quasi-tight wavelet filter bank \cite{diao}.

On the other hand, note that the filter \(h\) in this example satisfies the sub-QMF condition, i.e., \(1-{\tt{H}}(z){\tt{H}}^\ast(z)\ge 0\), \(\forall z\in \TT\), (see, for example,  \cite{SLP}), and thus there exists a Laurent polynomial \(p(z)\) such that \(1-{\tt{H}}(z){\tt{H}}^\ast(z)=|p(z)|^2\) by the Fej\'er-Riesz lemma \cite{Daub}. In fact, direct computation reveals that
\[p(z)=\frac{-2\sqrt{2}+\sqrt{6}}{32}+\frac{6\sqrt{2}-\sqrt{6}}{32}z+\frac{-6\sqrt{2}-\sqrt{6}}{32}z^2+\frac{2\sqrt{2}+\sqrt{6}}{32}z^3\]
satisfies the identity. Hence, by Result~\ref{result:SOStoUEP}, if we define highpass filter \(g_1\) as
\[\hat{g_1}(\ome):=\frac{1}{\sqrt{2}}\overline{{p}(e^{2i\ome})}H (e^{i\ome}),\]
then \(\{h, g_1,h_4,h_5\}\) is a tight wavelet filter bank. 

This wavelet filter bank is generally preferred because it is a tight wavelet filter bank and has fewer highpass filters. However, the process of obtaining the quasi-tight wavelet filter bank mentioned above is much simpler. Furthermore, the number of nonzero coefficients for the filter \(g_1\) is \(11\), while the numbers for the filters \(h_1,h_2\), and \(h_3\) are \(8, 6\), and \(8\), respectively. If filters with a smaller number of nonzero coefficients are preferred, the above quasi-tight wavelet filter bank may be a better choice. It is also clear that unlike this one-dimensional example, in high dimensions, determining whether a tight wavelet filter bank construction is feasible is highly nontrivial, as solving the SOS problem is required. 
\end{example}

%

\section*{Acknowledgments}
Part of the work was performed during the first author's visit to the Korea Institute for Advanced Study, Seoul 02455, Korea.

\end{document}